\def\qed{\leavevmode\unskip\penalty9999\hbox{}
         \nobreak\hfill
    \quad\hbox{\leavevmode  \hbox to.77778em{\hfil\vrule  \vbox to.675em {\hrule width.3em\vfil\hrule}\vrule\hfil}}
     \par\vskip 0pt}
\def\ra{\rangle}
\def\la{\langle}
\def\be{\begin{equation}}
\def\ee{\end{equation}}
\def\ba{\begin{array}}
\def\ea{\end{array}}
\newtheorem{theorem}{Theorem}
\newtheorem{lemma}{Lemma}
\begin{document}

\title{Unconditional robustness of multipartite entanglement of superposition}

\author{Hui-Hui Qin$^{1}$}

\author{Shao-Shuai Zhao$^{2}$}

\author{Shao-Ming Fei$^{3}$}

\affiliation{$^{1}$ Department of Sciences, Hangzhou Dianzi University, Hangzhou 310018, China\\
$^2$ Beijing National Day School, Beijing 100039, China\\
$^3$ School of Mathematical Sciences, Capital Normal University, Beijing 100048, China}



\begin{abstract}
We study the robustness of genuine multipartite entanglement and inseparability of multipartite pure states under superposition with product pure states. We introduce the concept of the maximal and the minimal Schmidt ranks for multipartite states. From the minimal Schmidt rank of the first order we present criterion of verifying unconditional robustness of genuine multipartite entanglement of multipartite pure states under superposition with product pure states. By the maximal Schmidt rank of the first order we verify the unconditional robustness of multipartite inseparability under superposition with product pure states. The number of product states superposed to a given entangled state which result in a separable state is investigated in detail. Furthermore, the minimal Schmidt ranks of the second order are also introduced to identify the unconditional robustness of an entangled state for tripartite inseparability.
\end{abstract}

\maketitle

\section{1. Introduction}
Quantum entanglement, especially genuine multipartite entanglement (GME) \cite{Horodecki} plays an important role in quantum information processing \cite{Braunstein} and quantum physics \cite{Vedral}, with significant applications in fields such as quantum teleportation \cite{Cavalcanti}, quantum dense coding \cite{Bruss} and quantum key distribution \cite{Barrett}. Entanglement is basically due to the superposition principle in quantum physics \cite{Nielsen}. However, superposition of an entangled pure state and product pure states may not always lead to entanglement. Here, the Schmidt rank of bipartite pure states \cite{Terhal,Nielsen} plays an important role both in the entanglement clarification of quantum states \cite{Sanpera,Sperling,Zhang} and in the purification of quantum systems \cite{Duer}. Halder and Sen showed in \cite{Halder} that any superposition of a bipartite entangled pure state and a product pure state produces only an entangled state if the initial entangled state has Schmidt rank no less than two, independent of the superposition coefficients and the choice of product states. The original entangled state is then called unconditionally robust in entanglement under superpositions, or unconditional inseparability under superpositions.

Since there are no longer universal Schmidt decompositions for multipartite quantum systems, the problem of unconditional robustness or unconditional inseparability under superpositions becomes more complicated. The superposition of a multipartite entangled state and a product state may result in both entanglement including GME and full separability. For example, a superposition of the maximally genuinely multipartite entangled state $|\psi_+\ra=\frac{1}{\sqrt{2}}(|000\ra+|111\ra)$ and $|000\ra$ can be fully separable $\sqrt{2}|\psi_+\ra-|000\ra=|111\ra$. Another superposition of $|\psi_+\ra$ and $|011\ra$ can be genuinely multipartite entangled $a|\psi_+\ra+b|011\ra$. 

In this study, we first introduce the concepts of the minimal and the maximal Schmidt rank of the first order for multipartite pure states. Based on these Schmidt ranks we obtain two criteria that identify whether a genuinely multipartite entangled state is unconditionally robust in GME and in inseparability, respectively in the second section. Also in the second section the corresponding Schmidt rank of the second order is also defined, which clarifies whether a superposition of a genuinely multipartite entangled pure state and a set of fully product states is triple separable. A series of detailed examples are given to illustrate all these conclusions. In the third section we consider the minimal number of superposed product states for vanishing the entanglement of a given bipartite pure state. One step further we also consider the number of superposed fully product states for vanishing the genuinely multipartite entanglement of a given multipartite pure state. In the last section some possible generalizations which might be considered are put forward.


\section{2. Unconditionally robust in GME and inseparability under superposition}
Let $\mathcal{H}_{A_i}$ be the Hilbert space associated with the $i$th subsystem $A_i~(i=1,2,\ldots,n)$ of the multipartite system $A=A_1A_2\ldots A_n$. We denote $\mathcal{G}_2(A)$ the set of all bi-partitions of $A$. We then define the minimal (maximal) Schmidt rank of the first order of an n-partite pure state $|\psi\ra\in \otimes^{n}_{i=1}\mathcal{H}_{A_i}$ as
\be
\begin{aligned}
r^1_{{\rm min}}\hat{=}\min_{X|\bar{X}\in\mathcal{G}_2(A)}
Rank[\rho_{X}],\\
r^1_{{\rm max}}\hat{=}\max_{X|\bar{X}\in\mathcal{G}_2(A)}
Rank[\rho_{X}],
\end{aligned}
\ee
where ${\rm Rank}[\rho_X]$ is the rank of the reduced density matrix $\rho_X=Tr_{X}|\psi\ra\la\psi|$ of $|\psi\ra$.

According to the above definition we have the following conclusion,
\begin{lemma}
An $n$-partite pure state $|\psi\ra\in\otimes^n_{i=1}\mathcal{H}_{A_i}$ is genuinely multipartite entangled if and only if $r^1_{{\rm min}}\geq2$, and entangled if and only if $r^1_{{\rm max}}\geq 2$.
\end{lemma}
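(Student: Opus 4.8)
The plan is to reduce both equivalences to two elementary facts about pure states. The first is the standard description of pure-state separability: $|\psi\ra$ is fully separable iff it is a tensor product $\otimes_{i=1}^{n}|\phi_i\ra$ of single-party vectors, and $|\psi\ra$ is genuinely multipartite entangled iff it is \emph{not} of the form $|\psi_X\ra\otimes|\psi_{\bar X}\ra$ for \emph{any} bipartition $X|\bar X\in\mathcal{G}_2(A)$. The second is the Schmidt-decomposition fact that a bipartite pure state factorizes across a cut $X|\bar X$ if and only if the corresponding reduced density matrix $\rho_X$ has rank one. Granting these, the GME part is immediate: $|\psi\ra$ is a product across a fixed cut $X|\bar X$ exactly when ${\rm Rank}[\rho_X]=1$, hence $|\psi\ra$ is biseparable (across some cut) iff $\min_{X|\bar X\in\mathcal{G}_2(A)}{\rm Rank}[\rho_X]=r^1_{{\rm min}}=1$; negating gives ``$|\psi\ra$ is GME iff $r^1_{{\rm min}}\geq 2$''.

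For the inseparability equivalence I would treat the two directions separately. If $|\psi\ra=\otimes_{i=1}^{n}|\phi_i\ra$, then for every $X|\bar X\in\mathcal{G}_2(A)$ the state is a product across that cut, so ${\rm Rank}[\rho_X]=1$ for all cuts and therefore $r^1_{{\rm max}}=1$; contrapositively, $r^1_{{\rm max}}\geq 2$ forces $|\psi\ra$ to be entangled. For the converse, observe that every single-party cut $A_i|\bar{A_i}$ belongs to $\mathcal{G}_2(A)$, so $r^1_{{\rm max}}=1$ forces each of these cuts to have Schmidt rank one. I then show by induction on $n$ that any pure state all of whose single-party cuts have Schmidt rank one is fully separable: rank one across $A_1|\bar{A_1}$ gives $|\psi\ra=|\phi_1\ra_{A_1}\otimes|\psi'\ra_{A_2\cdots A_n}$, the single-party marginals of $|\psi'\ra$ coincide with those of $|\psi\ra$ and hence are again rank one, and the induction hypothesis yields $|\psi'\ra=\otimes_{j=2}^{n}|\phi_j\ra$, so $|\psi\ra$ is fully separable. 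Combined with the easy direction this gives ``$|\psi\ra$ is entangled iff $r^1_{{\rm max}}\geq 2$''.

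Everything above is routine once the two facts are in hand; the only step that goes beyond unpacking definitions is the inductive peeling in the inseparability direction, where one must check that tracing out $A_1$ leaves the single-party marginals of the remaining block unchanged so that the induction hypothesis applies. For $n=1$ the set $\mathcal{G}_2(A)$ is empty and the statement is vacuous, so one tacitly assumes $n\geq 2$ throughout.
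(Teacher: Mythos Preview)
Your proposal is correct and follows the same line as the paper's proof: both reduce the GME claim to the observation that $|\psi\ra$ fails to be a product across a cut $X|\bar X$ exactly when ${\rm Rank}[\rho_X]\geq 2$, and both handle the inseparability claim via the equivalence ``fully separable $\Leftrightarrow$ every bipartite reduction has rank one''. The only difference is one of detail: the paper simply asserts that $|\psi\ra$ is entangled iff some bipartition has Schmidt rank at least two, whereas you actually prove the nontrivial implication (that $r^1_{\rm max}=1$ forces full separability) by the inductive peeling argument on single-party cuts. Your extra care there is justified and the induction step---checking that the single-party marginals of $|\psi'\ra$ agree with those of $|\psi\ra$ after splitting off $|\phi_1\ra_{A_1}$---is exactly what is needed.
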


\begin{proof}
It is obvious that $|\psi\ra\in\otimes^n_{i=1}\mathcal{H}_{A_i}$ is genuinely multipartite entangled if and only if there exists no bi-partition $X|\bar{X}\in \mathcal{G}_2(A)$ satisfying that $|\psi\ra=|\psi_X\ra\otimes|\psi_{\bar{X}}\ra$. That is to say ${\rm Rank}\Big[Tr_{X}|\psi\ra\la\psi|\Big]\geq 2$ for arbitrary bi-partition $X|\bar{X}\in\mathcal{G}_2(A)$. Equivalently, $r^1_{{\rm min}}\geq 2$, while $|\psi\ra$ is entangled if and only if there exists at least one bi-partition $Y|\bar{Y}\in\mathcal{G}_2(A)$ such that ${\rm Rank}[\rho_{Y}]\geq2$, i.e., $r^1_{{\rm max}}\geq {\rm Rank}[\rho_{Y}]\geq2$. 
\end{proof}

Unconditional inseparability of superposition is defined in \cite{Halder}. A bipartite entangled pure state $|\psi\ra$ is unconditionally robust in inseparability under superposition, if for arbitrary $p\in(0,1)$ and arbitrary product state $|p\ra=|\alpha\ra\otimes|\beta\ra$ such that the superposed state $\sqrt{p}|\psi\ra+\sqrt{1-p}|p\ra$ cannot be separable. For multipartite entangled pure states, inspired by \cite{Halder} we first give the following definition. If it is impossible to obtain a bi-separable state by superposing a genuinely multipartite entangled pure state $|\psi\ra$ and any product pure states, we call $|\psi\ra$ {\it unconditionally robust in GME} under superposition.
If $|\psi\ra$ is entangled, but not necessarily genuinely multipartite entangled, we say that $|\psi\ra$ is {\it unconditionally robust in inseparability} under superposition if no fully separable states can be obtained by superposing $|\psi\ra$ with any product states nontrivially. Obviously, if a multipartite pure state $|\psi\ra$ is unconditionally robust in GME under superposition, then it must be unconditionally robust in inseparability under superposition. According to our definition of Schmidt ranks of first order, we have

\begin{theorem}\label{theorem1}
Any nontrivial superposition of a $n$-partite genuinely multipartite entangled pure state $|\psi\ra$ with $r^1_{{\rm min}}\geq 3$ and arbitrary product pure states is still genuinely multipartite entangled; namely, arbitrary genuinely multipartite entangled pure state with $r^1_{{\rm min}}\geq 3$ is unconditionally robust in GME under superposition. More precisely, any nontrivial superposition of $|\psi\ra$ and $r^1_{min}-2$ product pure states gives rise to a genuinely multipartite entangled state.
\end{theorem}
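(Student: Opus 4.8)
The strategy is to combine the preceding Lemma with an elementary rank count, carried out one bi-partition at a time. Fix the superposed state $|\phi\ra = c_0|\psi\ra + \sum_{j=1}^{k}c_j|p_j\ra$ with $k = r^1_{\min}-2$, where we understand ``nontrivial'' to include that the coefficient $c_0$ of $|\psi\ra$ is nonzero. By the Lemma it suffices to prove that $|\phi\ra$ has first-order minimal Schmidt rank at least $2$, that is, $\mathrm{Rank}[\rho^{\phi}_{X}]\geq 2$ for every bi-partition $X|\bar{X}\in\mathcal{G}_2(A)$. The only structural input needed about the $|p_j\ra$ is that, being fully product, each one factorizes across an arbitrary cut as $|p_j\ra = |\gamma^{(j)}_X\ra\otimes|\gamma^{(j)}_{\bar{X}}\ra$, and hence has Schmidt rank one with respect to that cut.

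The core of the argument is linear-algebraic. For a fixed bi-partition $X|\bar{X}$, I would replace each ket by its coefficient matrix in a product basis adapted to the cut, so that $|\psi\ra\mapsto M_\psi$, $|p_j\ra\mapsto M_{p_j}$, and by linearity $|\phi\ra\mapsto M_\phi = c_0 M_\psi + \sum_{j}c_j M_{p_j}$, recalling that $\mathrm{Rank}[\rho_X]$ coincides with the ordinary matrix rank of the corresponding coefficient matrix. Each $M_{p_j}$ has rank one by the observation above, so $\mathrm{rank}\big(\sum_{j}c_j M_{p_j}\big)\leq k$, while $\mathrm{rank}(c_0 M_\psi)=\mathrm{rank}(M_\psi)=\mathrm{Rank}[\rho^{\psi}_X]\geq r^1_{\min}$ because $c_0\neq 0$. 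Writing $M_\psi = c_0^{-1}\big(M_\phi - \sum_{j}c_j M_{p_j}\big)$ and applying subadditivity of rank, $\mathrm{rank}(A+B)\leq \mathrm{rank}(A)+\mathrm{rank}(B)$, gives $\mathrm{rank}(M_\psi)\leq \mathrm{rank}(M_\phi)+k$, hence $\mathrm{Rank}[\rho^{\phi}_X] = \mathrm{rank}(M_\phi)\geq r^1_{\min}-k = 2$. Since $X|\bar{X}$ was arbitrary, the Lemma delivers that $|\phi\ra$ is genuinely multipartite entangled; the special case of a single superposed product state ($k=1$, which is exactly where the hypothesis $r^1_{\min}\geq 3$ is needed) is contained in this, and $|\phi\ra\neq 0$ is automatic since its Schmidt rank is at least $2$ in every cut.

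I do not expect a real obstacle here: the delicate points are bookkeeping rather than substance. One should pin down the dictionary between a ket and its coefficient matrix so that ``Schmidt rank across $X|\bar{X}$'' is literally ``rank of the coefficient matrix'' (up to the usual transpose/conjugation conventions implicit in $\rho_X$), and one must be explicit that a nontrivial superposition retains a nonzero coefficient on $|\psi\ra$ --- otherwise the identity $\mathrm{rank}(c_0 M_\psi)=\mathrm{rank}(M_\psi)$, and the conclusion itself, break down. It is also worth noting that the threshold is sharp: the identity $\sqrt{2}\,|\psi_+\ra-|000\ra=|111\ra$ from the introduction exhibits a genuinely multipartite entangled state with $r^1_{\min}=2$ that is rendered fully product by superposition with a single product state, so neither the hypothesis $r^1_{\min}\geq 3$ nor the count $r^1_{\min}-2$ can be improved.
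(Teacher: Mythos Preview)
Your proposal is correct and follows the same overall architecture as the paper: fix an arbitrary bi-partition $X|\bar X$, observe that each fully product $|p_j\ra$ has Schmidt rank one across that cut, bound the Schmidt rank of the superposition from below by $r^1_{\min}-k=2$, and then invoke Lemma~1 to conclude GME. The one substantive difference is that the paper does not carry out the rank estimate itself but instead invokes Theorem~1 of \cite{Halder} (and its stated generalization) for the statement ``superposing with $k$ product states drops the Schmidt rank by at most $k$,'' whereas you prove this directly via coefficient matrices and subadditivity of matrix rank, $\mathrm{rank}(M_\psi)\le\mathrm{rank}(M_\phi)+\mathrm{rank}\big(\sum_j c_j M_{p_j}\big)\le \mathrm{rank}(M_\phi)+k$. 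Your route is therefore self-contained and makes the origin of the count $r^1_{\min}-2$ explicit, at the cost of spelling out the (standard) dictionary between reduced-state rank and coefficient-matrix rank; the paper's route is shorter but defers the actual work to the cited reference.
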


\begin{proof}
Assume that under the bi-partition $Y|\bar{Y}$ of $|\psi\ra$ one gets the minimal Schmidt rank $r^1_{{\rm min}}$ of the first order, i.e., ${\rm Rank}[\rho_Y]={\rm Rank}[Tr_Y|\psi\ra\la\psi|]=r^1_{{\rm min}}$. Then $|\psi\ra$ can be decomposed into Schmidt form,
$$
|\psi\ra=\sum^{r^1_{{\rm min}}}_{i=1}
a_i|i_Y\ra|i_{\bar{Y}}\ra,$$
where $\{|i_Y\ra\}$ and $\{|i_{\bar{Y}}\ra\}$ are respectively orthonormal sets in subspaces $Y$ and $\bar{Y}$, $\sum_i|a_i|^2=1$. Consider the superposition of $|\psi\ra$ and an arbitrary product pure state, $|p\ra=\otimes^n_{i=1}|\alpha_i\ra=
\left(\underset{|\alpha_i\ra\in Y}{\otimes}|\alpha_i\ra\right)\otimes\left(\underset{|\alpha_k\ra\in \bar{Y}}{\otimes}|\alpha_k\ra\right)\hat{=}|\alpha_Y\ra\otimes|\alpha_{\bar{Y}}\ra$. According to Theorem 1 in \cite{Halder} any superposed state $|\psi'\ra=\lambda|\psi\ra+
\mu|p\ra=\lambda|\psi\ra+\mu|\alpha_Y\ra
\otimes|\alpha_{\bar{Y}}\ra$ with nontrivial superposition coefficient $\lambda\neq0$ satisfies that ${\rm Rank}[Tr_{Y}|\psi'\ra\la\psi'|]\geq r^1_{{\rm min}}-1\geq 2$. Then for arbitrary bi-partition $X|\bar{X}\in\mathcal{G}_2$, $|\psi\ra$ can be decomposed into another Schmidt form $|\psi\ra=\sum^r_{k=1}b_i|k_X\ra|k_{\bar{X}}\ra$, where $r={\rm Rank}[Tr_{X}|\psi\ra\la\psi|]\geq r^1_{{\rm min}}$ is the Schmidt rank in this bi-partition. Therefore, the superposed state $|\psi'\ra$ satisfies that ${\rm Rank}[Tr_{X}|\psi'\ra\la\psi'|]\geq r-1 \geq r^1_{{\rm min}}-1\geq 2$. This means that $|\psi'\ra$ is genuinely multipartite entangled, and $|\psi\ra$ is unconditionally robust in GME under superposition.

According to the generalization of Theorem 1 in \cite{Halder}, by superposing $|\psi\ra$ and $r^1_{{\rm min}}-2$ fully product pure states $\{|p_i\ra\}^{r^1_{{\rm min}}-2}_{i=1}$ we finally get a state $|\tilde{\psi}\ra=\lambda|\psi\ra+\sum^{r^1_{{\rm min}}-2}_{i=1}\mu_{i}|p_i\ra$ with nontrivial superposition coefficient $\lambda\neq0$ satisfying 
$$
{\rm Rank}[Tr_X|\tilde{\psi}\ra\la\tilde{\psi}|]\geq r^1_{{\rm min}}-(r^1_{{\rm min}}-2)=2
$$
for arbitrary bi-partition $X|\bar{X}\in\mathcal{G}_2$. It implies that $|\tilde{\psi}\ra$ is still a genuinely multipartite entangled state.
\end{proof}

\autoref{theorem1} characterizes a class of multipartite pure states that is unconditionally robust in GME under superposition. In addition to the unconditionally robust in GME under superposition, we have following conclusion for the unconditionally robust in inseparability under superposition.

\begin{theorem}\label{theorem2}
Any nontrivial superposition of an entangled pure state (may not be genuinely multipartite entangled) $|\psi\ra$ with $r^1_{{\rm max}}\geq 3$ and $r^1_{{\rm max}}-2$ product pure states gives rise to an entangled state.
\end{theorem}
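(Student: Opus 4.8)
The plan is to reduce the problem, exactly as in the proof of \autoref{theorem1}, to a bipartite statement along a single well-chosen bi-partition. Let $Y|\bar{Y}\in\mathcal{G}_2(A)$ be a bi-partition at which the maximum in the definition of $r^1_{{\rm max}}$ is attained, so that ${\rm Rank}[\rho_Y]={\rm Rank}[Tr_Y|\psi\ra\la\psi|]=r^1_{{\rm max}}$, and write $|\psi\ra=\sum_{i=1}^{r^1_{{\rm max}}}a_i|i_Y\ra|i_{\bar{Y}}\ra$ in Schmidt form across $Y|\bar{Y}$. The key elementary observation is that any fully product pure state $|p\ra=\otimes_{i=1}^{n}|\alpha_i\ra$ is, in particular, a product state $|\alpha_Y\ra\otimes|\alpha_{\bar{Y}}\ra$ across the bi-partition $Y|\bar{Y}$, hence has Schmidt rank $1$ there; consequently a superposition of $|\psi\ra$ with product states is, along $Y|\bar{Y}$, a superposition of a Schmidt-rank-$r^1_{{\rm max}}$ state with Schmidt-rank-$1$ states.

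Next I would invoke the generalization of Theorem 1 in \cite{Halder} already used in the proof of \autoref{theorem1}: superposing $|\psi\ra$ with $m$ pure states each of Schmidt rank $1$ across $Y|\bar{Y}$ can lower the Schmidt rank across $Y|\bar{Y}$ by at most $m$. Applying this with the $m=r^1_{{\rm max}}-2$ product pure states $\{|p_i\ra\}_{i=1}^{r^1_{{\rm max}}-2}$, the superposed state $|\tilde{\psi}\ra=\lambda|\psi\ra+\sum_{i=1}^{r^1_{{\rm max}}-2}\mu_i|p_i\ra$ with nontrivial coefficient $\lambda\neq0$ satisfies ${\rm Rank}[Tr_Y|\tilde{\psi}\ra\la\tilde{\psi}|]\geq r^1_{{\rm max}}-(r^1_{{\rm max}}-2)=2$, which in particular is meaningful since $r^1_{{\rm max}}\geq 3$.

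Finally I would conclude that $|\tilde{\psi}\ra$ cannot be fully separable: a fully product pure state $\otimes_{i=1}^{n}|\gamma_i\ra$ has Schmidt rank $1$ across every bi-partition in $\mathcal{G}_2(A)$, and in particular across $Y|\bar{Y}$, contradicting ${\rm Rank}[Tr_Y|\tilde{\psi}\ra\la\tilde{\psi}|]\geq 2$. Hence $|\tilde{\psi}\ra$ is entangled, and $|\psi\ra$ is unconditionally robust in inseparability under superposition. I do not expect a genuine obstacle here: the proof is structurally identical to that of \autoref{theorem1}, the only change being that one needs the Schmidt-rank bound to hold merely at the maximizing cut $Y|\bar{Y}$ rather than at every bi-partition, which is precisely what distinguishes inseparability from GME. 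The one point that should be stated carefully is the passage from "Schmidt rank $\geq 2$ along a single cut" to "not fully separable", together with the precise form of the generalized bipartite lemma of \cite{Halder} for several superposed product states.
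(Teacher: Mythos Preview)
Your proof is correct and follows essentially the same approach as the paper: pick a bi-partition $Y|\bar{Y}$ realizing $r^1_{\rm max}$, note that fully product states are product across that cut, and invoke the generalized Halder--Sen bound to get Schmidt rank $\geq 2$ along $Y|\bar{Y}$, hence non--full--separability. If anything, your write-up is more explicit than the paper's in spelling out the last step (Schmidt rank $\geq 2$ along one cut implies not fully separable) and the role of the hypothesis $r^1_{\rm max}\geq 3$.
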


\begin{proof}
The proof is similar to the proof of \autoref{theorem1}. Assume that under the bi-partition $Y|\bar{Y}$ the state $|\psi\ra$ attains the Schmidt Rank $r^1_{{\rm max}}$, i.e., $|\psi\ra$ can be decomposed into the form
$$
|\psi\ra=\sum^{r^1_{{\rm max}}}_{i=1}
c_i|i_{Y}\ra|i_{\bar{Y}}\ra,
$$
where $\sum^{r^1_{{\rm max}}}_{i=1}|c_i|^2=1$. According to the generalization of Theorem 1 in \cite{Halder} any nontrivial superposition of $|\psi\ra$ and a set of fully product states $\{|p_i\ra=|\alpha^i_1\ra\otimes|\alpha^i_2\ra\otimes\cdots\otimes|\alpha^i_n\ra=|\alpha^i_Y\ra\otimes
|\alpha^i_{\bar{Y}}\ra|\}^{r^1_{{\rm max}}-2}_{i=1}$ still gives rise to an entangled state under this bi-partition.
\end{proof}

The theorems above can be also understood as follows. For a genuinely multipartite entangled pure state $|\psi\ra$ with $r^1_{{\rm max}}\geq r^1_{{\rm min}}\geq 3$, one cannot obtain any bi-separable state by superposing $|\psi\ra$ with $r^1_{{\rm min}}-2$ product pure states according to \autoref{theorem1}. \autoref{theorem2} tells us that no fully separable state can be obtained by superposing $|\psi\ra$ with $r^1_{{\rm max}}-2$ product pure states. Next, we introduce the minimal Schmidt rank of second order for further investigations.

For an $n$-partite pure state $|\psi\ra\in\otimes^n_{i=1}\mathcal{H}_{A_i}$, under given bi-partition $X_1|X_2\in\mathcal{G}_2(A)$ there exists an unique $r(X_1|X_2)={\rm Rank}[Tr_{X_1}|\psi\ra\la\psi|]$ such that
$$
|\psi\ra=\sum^{r(X_1|X_2)}_{i_1=1}a_{i_1}|\psi_{i_1}(X_1)\ra|\phi_{i_1}(X_2)\ra
$$
for some orthonormal states $\{|\psi_{i_1}(X_1)\ra\}$ and $\{|\phi_{i_1}(X_2)\ra\}$ in subspaces $X_1$ and $X_2$, respectively. If $X_1$ contains two subspaces $X_{11}$ and $X_{12}$, one can sequentially partition $X_1$ as $X_{11}|X_{12}$. Then there exists a unique Schmidt rank for every $|\psi_{i_1}(X_1)\ra$ such that $$
|\psi_{i_1}(X_1)\ra=\sum^{r(X_{11}|X_{12};
X_1|X_2,i_1)}_{i_2=1}a_{i_2i_1}|\psi_{i_2i_1}(X_{11})\ra|\phi_{i_2i_1}(X_{12})\ra,
$$
where
$$r(X_{11}|X_{12};X_1|X_2,i_1)={\rm Rank}[Tr_{X_{11}}|\psi_{i_1}(X_1)\ra\la\psi_{i_1}(X_1)|]$$
is the Schmidt rank of $|\psi_{i_1}(X_1)\ra$. For the sake of convenience we define
\begin{widetext}
$$
r(X_{11}|X_{12};X_1|X_2,i_1)=\left.\Big\{
\begin{aligned}
&{\rm Rank}[Tr_{X_{11}}|\psi_{i_1}(X_1)\ra\la\psi_{i_1}(X_1)|], ~~~{\rm if}~X_{11}|X_{12}~ {\rm is}~{\rm an}~{\rm available}~{\rm bi-partition}~{\rm of}~\mathcal{G}_2(X_1);\\
&1,\qquad\qquad\qquad\qquad\qquad\qquad\qquad ~{\rm if}~ X_1 ~{\rm is}~{\rm single}~ {\rm subspace}.
\end{aligned}\right.
$$
We then define the minimal Schmidt rank of the second order of $|\psi\ra$ by
\be
r^2_{{\rm min}}=\min_{X_1|X_2\in\mathcal{G}_2(A)}\sum^{r(X_1|X_2)}_{i_1=1}\min_{X_{11}
|X_{12}\in\mathcal{G}_2(X_1)}r(X_{11}|X_{12};X_1|X_2,i_1).
\ee
\end{widetext}
Based on the Schmidt rank of the second order we have the following results about triple separable states.

\begin{theorem} 
Any nontrivial superposition of an $n$-partite pure state $|\psi\ra$ with $r^2_{{\rm min}}\geq 3$ and any $r^2_{{\rm min}}-2$ product pure states cannot be a triple separable state.
\end{theorem}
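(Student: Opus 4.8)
The plan is to argue by contradiction, reproducing the two nested bi-partitions that appear in the definition of $r^2_{{\rm min}}$ and invoking the generalization of Theorem 1 in \cite{Halder} (the rank-perturbation estimate) once at each level. Write the superposed state as $|\tilde\psi\ra=\lambda|\psi\ra+\sum_{i=1}^{m}\mu_i|p_i\ra$ with $\lambda\neq0$, $m=r^2_{{\rm min}}-2$ and each $|p_i\ra$ fully product, and assume $|\tilde\psi\ra$ is triple separable, say $|\tilde\psi\ra=|\chi_1\ra_{Z_1}\otimes|\chi_2\ra_{Z_2}\otimes|\chi_3\ra_{Z_3}$ for some partition $A=Z_1\cup Z_2\cup Z_3$. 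Fix this tripartition and set $X_2=Z_3$, $X_1=Z_1Z_2$, $X_{11}=Z_1$, $X_{12}=Z_2$. Rewriting $|\psi\ra=\frac1\lambda\big(|\tilde\psi\ra-\sum_i\mu_i|p_i\ra\big)$ exhibits $|\psi\ra$ as a superposition whose distinguished term $|\tilde\psi\ra$ has a nonzero coefficient; moreover $|\tilde\psi\ra$ and every $|p_i\ra$ are products across $X_1|X_2$ whose $X_1$-parts are again products across $X_{11}|X_{12}$.

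First I would control the outer cut $X_1|X_2$. Since $|\tilde\psi\ra$ and each $|p_i\ra$ have Schmidt rank one there, the rank-perturbation estimate gives $r(X_1|X_2)={\rm Rank}[Tr_{X_1}|\psi\ra\la\psi|]\leq 1+m$. In the Schmidt form $|\psi\ra=\sum_{i_1=1}^{r(X_1|X_2)}a_{i_1}|\psi_{i_1}(X_1)\ra|\phi_{i_1}(X_2)\ra$, every $X_1$-part is, up to a scalar, the partial inner product $\la\phi_{i_1}(X_2)|\psi\ra$, hence a fixed linear combination of the $1+m$ vectors $|\chi_1\ra|\chi_2\ra$ and $\{|\alpha^i_{X_1}\ra\}$, each of which is an $X_{11}|X_{12}$-product state; therefore all $|\psi_{i_1}(X_1)\ra$ lie in the span $V$ of these $1+m$ product vectors.

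Next I would control the inner cut, the aim being $\sum_{i_1}r(X_{11}|X_{12};X_1|X_2,i_1)\leq 1+m$. Granting this, one has $r^2_{{\rm min}}\leq\sum_{i_1}\min_{X_{11}|X_{12}}r(X_{11}|X_{12};X_1|X_2,i_1)\leq 1+m=r^2_{{\rm min}}-1$, which is absurd; as the tripartition was arbitrary this shows $|\tilde\psi\ra$ cannot be triple separable. The bound should come from a second-order analogue of the rank-perturbation estimate: that the quantity $\sum_{i_1}r(X_{11}|X_{12};X_1|X_2,i_1)$, read off from $|\psi\ra=\sum_{i_1,i_2}a_{i_1}a_{i_2i_1}|\psi_{i_2i_1}(X_{11})\ra|\phi_{i_2i_1}(X_{12})\ra|\phi_{i_1}(X_2)\ra$, drops by at most one when a single fully product state---contributing one rank-one term at each of the two nested levels---is superposed; applying this $m$ times to $|\tilde\psi\ra=\lambda|\psi\ra+\sum_i\mu_i|p_i\ra$, and using that this quantity equals $1$ for any product state, yields $\sum_{i_1}r(X_{11}|X_{12};X_1|X_2,i_1)[\psi]\leq 1+m$.

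I expect this last inequality to be the real obstacle. Unlike $r^1_{{\rm min}}$ and $r^1_{{\rm max}}$, the sum $\sum_{i_1}r(X_{11}|X_{12};X_1|X_2,i_1)$ is not manifestly the rank of a single operator, so the elementary perturbation lemma for matrices does not apply verbatim. The work would be either to realise this ``total second-order Schmidt rank'' as a genuine rank---e.g.\ of a suitable flattening of $|\psi\ra$ determined by the two nested bi-partitions---or else to control directly how an orthonormal eigenbasis of $\rho_{X_1}$ that is constrained to lie in a span of $1+m$ $X_{11}|X_{12}$-product vectors can distribute its Schmidt ranks, and then to verify stability under superposition with a fully product state. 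A secondary nuisance is that when the Schmidt spectrum of $\rho_{X_1}$ is degenerate the Schmidt basis is non-unique, so $r^2_{{\rm min}}$ must be read with both the inner bi-partition and that basis chosen to minimise the relevant sum.
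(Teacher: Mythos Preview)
Your plan follows the same skeleton as the paper: reduce the triple-separability question to a ``one product term can lower the total nested Schmidt count by at most one'' perturbation estimate, applied $r^2_{\rm min}-2$ times. The difference is only in presentation: you argue by contradiction, fixing a hypothetical tripartition $Z_1|Z_2|Z_3$ and reading off the two nested cuts from it, whereas the paper simply asserts the conclusion in one sentence (``there exists at least one $|\psi_{i_1}(X_1)\rangle$ which is genuinely multipartite entangled in subsystem $X_1$. Then the total state cannot be triple separable'').

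The gap you flag---that the quantity $\sum_{i_1}r(X_{11}|X_{12};X_1|X_2,i_1)$ is not manifestly the rank of a single linear map, so the elementary rank-perturbation lemma does not apply verbatim at the inner level---is genuine, and the paper does \emph{not} close it. The paper's proof takes the second-order subadditivity for granted; it gives no mechanism linking the Schmidt vectors $|\psi_{i_1}(X_1)\rangle$ of $|\psi\rangle$ to those of the superposed state $|\tilde\psi\rangle$, and no argument that a fully product perturbation decreases the inner sum by at most one. Your secondary observation about degeneracy is also on point: when the outer Schmidt spectrum has repeated values the basis $\{|\psi_{i_1}(X_1)\rangle\}$ is not unique, and the inner sum can change under a change of that basis (e.g.\ for $|\psi\rangle\propto|00\rangle|+\rangle+|11\rangle|-\rangle$ versus $|\phi^+\rangle|0\rangle+|\phi^-\rangle|1\rangle$), so $r^2_{\rm min}$ as written in the paper is ambiguous unless one also minimises over Schmidt bases. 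In short, your proposal is at least as complete as the paper's argument, and the obstacle you isolate is exactly the step the paper leaves unproved.
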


\begin{proof}
According to the definition of $r^2_{{\rm min}}$, it is obvious that $r^2_{{\rm min}}\geq r^1_{{\rm min}}$. The state $|\psi\ra$ is genuinely multipartite entangled when $r^1_{{\rm min}}\geq 2$. The Schmidt rank of second order takes over the minimal values of Schmidt rank of all kinds of bi-partitions. After superposing $r^2_{{\rm min}}-2$ product pure states with $|\psi\ra$, the resulted state can not be triple separable, as there exists at least one $|\psi_{i_1}(X_1)\ra$ which is genuinely multipartite entangled in subsystem $X_1$. Then the total state can not be triple separable.
\end{proof}

\bigskip
{\it Example 1.} Let us consider the generalized $n$-partite GHZ state, $|\psi\ra=\frac{1}{\sqrt{d}}\sum^{d-1}_{i=0}|ii\cdots i\ra\in \otimes^n_{i=1}\mathcal{H}_{A_i}$. We have that $r(X_1|X_2)=d$ for all bi-partitions in $\mathcal{G}_2(A)$, and $r(X_{11}|X_{12};X_1|X_2,i_1)=1$ for arbitrary $i_1$ and bipartition in $\mathcal{G}_{2}(X_1)$. Hence, $r^2_{{\rm min}}=r^1_{{\rm min}}=d$. Therefore, it is impossible to obtain a bi-separable state by superposing $|\psi\ra$ with $d-2$ product pure states. Of course it is impossible to obtain a triple separable state either.

\bigskip
Note that in Theorems 1 and 2, the entangled state $|\psi\ra$ and the arbitrary product states $|p\ra$ are not necessary to be orthogonal each other. One may also consider the orthogonal superposition of $|\psi\ra$ and product pure states in the complementary space of $|\psi\ra$. Let us consider the following examples in Hilbert space $\mathbb{C}^2\otimes\mathbb{C}^2\otimes\mathbb{C}^2$.

\bigskip
{\it Example 2.} We consider the genuinely multipartite entangled state $|\psi_1\ra=a_0|000\ra+a_1|111\ra$ with $a^2_0+a^2_1=1$. It is directly verified that $|\psi_1\ra$ is unconditionally robust in GME under superposition, with respect to any product pure states $|p\ra=|\alpha\ra|\beta\ra|\gamma\ra$ which is orthogonal to $|\psi_1\ra$, since any nontrivial superposed state $\cos\alpha|\psi\ra+\sin\alpha|p\ra$ $(\cos\alpha\sin\alpha\neq0)$ has all reduced density matrices of rank $2$.

\bigskip
{\it Example 3.} Consider  $|\psi_2\ra=\frac{1}{\sqrt{3}}(|000\ra+
|100\ra+|111\ra)$, which is a GMEd with ${\rm Rank}[\rho_i]=2$ for all the reduced single qubit states $(i=1,2,3)$. $|\psi_2\ra$ is not unconditionally robust in GME under superposition. This can be seen from that the superposed state $\frac{\sqrt{3}}{2}|\psi_2\ra+\frac{1}{2}|001\ra=|+\ra\otimes|\psi_+\ra$ is bi-separable, where $|+\ra=\frac{1}{\sqrt{2}}(|0\ra+|1\ra)$ and $|\psi_+\ra=\frac{1}{\sqrt{2}}(|00\ra+|11\ra)$.

\bigskip
{\it Example 4.} Consider state $|\psi_3\ra=\frac{\sqrt{2}}{4}|001\ra+
\frac{\sqrt{5}}{4}|01\ra|+\ra+\frac{3}{4}|1\ra|b\ra|+\ra$, where $|b\ra=\frac{2}{3}|0\ra+\frac{\sqrt{5}}{3}|1\ra$ and $|+\ra$ is defined in Example 3. $|\psi_3\ra$ is genuinely multipartite entangled with ${\rm Rank}[\rho_i]=2$ $(i=1,2,3)$. It can be verified that the superposition of $|\psi_3\ra$ and the orthogonal product state $|p\ra=|000\ra$ gives a separable state, $\frac{2\sqrt{2}}{3}|\psi_3\ra+\frac{1}{3}|p\ra
=|+\ra|b\ra|+\ra$.

\bigskip
The examples above show that superpositions of a genuinely multipartite entangled pure state and an orthogonal product pure state result in different types of entanglement or full separability, depending both on the superposition coefficients and the choices of the product pure states.
If more product states are superposed to a genuinely multipartite entangled state, a separable state can be obtained eventually. We investigate the minimal number of product states needed to be superposed so as to obtain a separable state in next section.

\section{3. Minimal number of superposed product states for separability}

It is impossible to vanish the entanglement of a bipartite state $|\psi\ra$ with Schmidt rank $r$ by superposing $r-2$ product states \cite{Halder}. Obviously, superposing $r-1$ product states is enough to make $|\psi\ra$ separable in general. However, if the superposed product states are required to be orthogonal, the things are different.

\begin{lemma}
For a bipartite pure state $|\psi\ra$ with Schmidt rank $r\geq 3$, there exits a set of $r$ mutually orthogonal product states $\mathbb{S}=\{|p_i\ra\}^{i=r}_{i=1}$ such that the superposition of $|\psi\ra$ and the product states gives a separable state.
\end{lemma}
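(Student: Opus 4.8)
The plan is to give a constructive proof: write down $r$ explicit mutually orthogonal product states tailored to the Schmidt basis of $|\psi\rangle$, and exhibit one nontrivial superposition of $|\psi\rangle$ with them whose Schmidt rank collapses to $1$.

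First I would fix a Schmidt decomposition $|\psi\rangle=\sum_{i=1}^{r}a_i\,|i\rangle_A|i\rangle_B$ with orthonormal families $\{|i\rangle_A\}$, $\{|i\rangle_B\}$ and, after absorbing phases, $a_i>0$ and $\sum_i a_i^2=1$; crucially \emph{all} $a_i$ are nonzero, precisely because the Schmidt rank is $r$. For $k=1,\dots,r$ I set $|p_k\rangle=|k\rangle_A\otimes\tfrac{1}{\sqrt{r-1}}\sum_{j\neq k}|j\rangle_B$. These are normalized product states; they are pairwise orthogonal because their first tensor factors $|k\rangle_A$ are; and, as a bonus matching the ``orthogonal superposition in the complementary space'' remark above, each is orthogonal to $|\psi\rangle$, since $\langle\psi|p_k\rangle=\tfrac{a_k}{\sqrt{r-1}}\sum_{j\neq k}\langle k|j\rangle=0$.

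The heart of the argument is a single computation. Consider
\[
|\widetilde{\psi}\rangle=|\psi\rangle+\sqrt{r-1}\,\sum_{k=1}^{r}a_k\,|p_k\rangle .
\]
Collecting terms according to the first subsystem, the $|k\rangle_A$-part is $a_k\,|k\rangle_A\otimes\bigl(|k\rangle_B+\sum_{j\neq k}|j\rangle_B\bigr)=a_k\,|k\rangle_A\otimes|s\rangle_B$ with $|s\rangle_B:=\sum_{j=1}^{r}|j\rangle_B$, so that $|\widetilde{\psi}\rangle=\bigl(\sum_{k=1}^{r}a_k|k\rangle_A\bigr)\otimes|s\rangle_B$, a product — hence separable — state (normalization is irrelevant and can be restored at the end). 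The superposition is genuinely nontrivial: the coefficient of $|\psi\rangle$ is $1$ and that of each $|p_k\rangle$ is $\sqrt{r-1}\,a_k\neq0$, so all $r$ members of $\mathbb{S}=\{|p_k\rangle\}_{k=1}^{r}$ actually occur.

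I do not anticipate a genuine obstacle: there is no hard estimate, and the only ``creative'' point is the choice of the $|p_k\rangle$ — the guiding idea being that superposing $|\psi\rangle$ with them should fill in every Schmidt column $|k\rangle_B$ to one and the same vector $|s\rangle_B$, forcing the result to factorize. Two minor points deserve a line each: that $a_i\neq0$ for all $i$ (used above, equivalent to the Schmidt rank being exactly $r$), and that the construction in fact only uses $r\geq2$, the hypothesis $r\geq3$ being the one inherited from the surrounding discussion, where the relevant comparison is with superposing $r-2$ product states.
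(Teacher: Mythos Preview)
Your construction is correct and essentially identical to the paper's: the same product states $|p_k\rangle=|k\rangle\otimes\tfrac{1}{\sqrt{r-1}}\sum_{j\neq k}|j\rangle$ are used, with the same superposition producing $\bigl(\sum_k a_k|k\rangle\bigr)\otimes\bigl(\sum_j|j\rangle\bigr)$. The only cosmetic difference is that the paper normalizes the coefficients from the outset (taking $\lambda=1/\sqrt{r}$ and $\mu_k=\sqrt{(r-1)/r}\,a_k$), which you correctly note can be done at the end.
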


\begin{proof}
$|\psi\ra$ has a Schmidt form,
$|\psi\ra=\sum^{r}_{i=1}a_i|i\ra|i\ra$ with $\sum^{r}_{i=1}a^2_i=1$. Set
\be
|p_i\ra=|i\ra\otimes\frac{1}{\sqrt{r-1}}
\sum_{j\neq i}|j\ra.\nonumber
\ee
Then $\mathbb{S}$ is a set of mutually orthogonal product states, $\la p_i|p_j\ra=\delta_{ij}$. Moreover, $\la\psi|p_i\ra=0,~\forall i=1,2,\ldots,r$. Taking $\lambda=\frac{1}{\sqrt{r}}$ and $\mu_i=\sqrt{\frac{r-1}{r}}a_i$ $(i=1,2,\ldots,r)$, we have the following separable superposed
state,
\be
\lambda|\psi\ra+
\sum^{r}_{i=1}\mu_i|p_i\ra
=\left(\sum^{r}_{i=1}a_i|i\ra\right)\otimes
\left(\frac{1}{\sqrt{r}}
\sum^{r}_{i=1}|i\ra\right).\nonumber
\ee
\end{proof}

Then it is an interesting issue to assure whether it is possible to obtain a separable state by superposing $r-1$ pairwise orthogonal product states with a given entangled state of Schmidt rank $r$. The answer depends on the detailed entangled state.

\begin{theorem}\label{thm4}
For a bipartite state with Schmidt form $|\psi\ra=\sum^r_{i=1}a_i|ii\ra$, it is possible to obtain a separable state by superposing $|\psi\ra$ with $r-1$ pairwise orthogonal product pure states only when not all $|a_i|\equiv 1/\sqrt{r}$ for $i=1,2,\cdots,r$.
\end{theorem}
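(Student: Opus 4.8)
The plan is to prove the contrapositive direction carefully and then exhibit the obstruction in the uniform case. First I would set up the general superposition: suppose $|\phi\ra = \lambda|\psi\ra + \sum_{k=1}^{r-1}\mu_k|q_k\ra$ is a product state $|\alpha\ra\otimes|\beta\ra$, where the $|q_k\ra = |u_k\ra\otimes|v_k\ra$ are pairwise orthogonal product states and $\lambda\neq 0$. Rewriting, $\lambda|\psi\ra = |\alpha\ra\otimes|\beta\ra - \sum_{k=1}^{r-1}\mu_k|u_k\ra\otimes|v_k\ra$, so $|\psi\ra$ lies in the span of $r$ product vectors. Taking the reduced state on the first system, $\rho_A^{|\psi\ra}$ must have rank $r$ (by hypothesis), which forces $\{|\alpha\ra, |u_1\ra, \ldots, |u_{r-1}\ra\}$ to be linearly independent, hence a basis of the $r$-dimensional support; symmetrically for the second system with $\{|\beta\ra, |v_1\ra,\ldots,|v_{r-1}\ra\}$. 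So I would work in the bases $|i\ra$ of the Schmidt decomposition and change to the basis $\{|\alpha\ra\}\cup\{|u_k\ra\}$ on each side.

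The key step is to extract a concrete constraint. Expanding $|\psi\ra = \sum_i a_i |ii\ra$ in terms of the product-vector expansion of $\lambda|\psi\ra$ above, and using that the coefficient matrix of $|\psi\ra$ in the Schmidt basis is $\mathrm{diag}(a_1,\ldots,a_r)$, I would write $\lambda\,\mathrm{diag}(a_i) = |\alpha\ra\la\bar\beta| - \sum_k \mu_k |u_k\ra\la\bar v_k|$ as an $r\times r$ matrix identity (with $\la\bar\beta|$ denoting the conjugate-coordinate row vector). The orthogonality $\la q_j|q_k\ra = \la u_j|u_k\ra\la v_j|v_k\ra = \delta_{jk}$ is the extra ingredient that a generic span of $r$ product vectors does not have. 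The natural move is to compute $\mathrm{Tr}$ and, more usefully, to hit the identity with suitable projectors or to compute $\la\psi|\phi\ra$ and $\la q_k|\phi\ra$: since $\la\psi|q_k\ra$ need not vanish in general I would instead exploit that $|\phi\ra$ is a \emph{unit} product vector, giving $|\lambda|^2 + \sum_k|\mu_k|^2 + 2\,\mathrm{Re}\big(\bar\lambda \sum_k \mu_k \la\psi|q_k\ra\big) + \ldots = 1$ together with the product structure. A cleaner route: from $\lambda\,\mathrm{diag}(a_i) + \sum_k \mu_k |u_k\ra\la\bar v_k| = |\alpha\ra\la\bar\beta|$, the left side is a rank-one matrix, so \emph{all} $2\times 2$ minors vanish; choosing the minor on rows/columns $\{i,j\}$ and expanding gives, after using that $\{|u_k\ra\}$ together with $|\alpha\ra$ is a basis, a system of equations relating $\lambda a_i \lambda a_j$ to the off-diagonal data. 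Pushing this system through, I expect to derive that $|\lambda|^2 |a_i|^2$ is independent of $i$ is \emph{impossible} unless the off-diagonal product terms are arranged in a way that exactly fails when all $|a_i|$ are equal — i.e. the uniform state forces $|\phi\ra$ itself to have Schmidt rank $\ge 2$.

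For the converse (showing it \emph{is} possible when the $|a_i|$ are not all equal), I would give an explicit construction generalizing the Lemma 2 trick but using only $r-1$ orthogonal product states: reorder so that $|a_1|\ne |a_2|$, pair up the "defect" between the largest and smallest Schmidt coefficients, and choose the $|u_k\ra, |v_k\ra$ so that the correction $\sum_k \mu_k|u_k\ra\la\bar v_k|$ turns $\mathrm{diag}(a_i)$ into a rank-one matrix; the freedom gained from the coefficients being unequal is precisely what lets one do this with one fewer product state than in Lemma 2. The main obstacle I anticipate is the forward direction: showing that the rank-one constraint on $\lambda\,\mathrm{diag}(a_i) + \sum_k\mu_k|u_k\ra\la\bar v_k|$ \emph{combined with} the pairwise orthogonality of the $|q_k\ra$ is genuinely obstructed when all $|a_i|=1/\sqrt r$ — this is a linear-algebra rigidity statement that will require carefully tracking how the orthogonality conditions overdetermine the $(r-1)$ rank-one corrections, and I'd expect to need a dimension count or an inductive peeling argument on $r$.
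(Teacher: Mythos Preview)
Your proposal has a genuine gap in the forward (``only when'') direction: you explicitly allow $\la\psi|q_k\ra\neq 0$, but the statement---read in the paper's context and confirmed by its $r=2$ argument---requires the product states to be orthogonal to $|\psi\ra$ as well, not merely pairwise orthogonal among themselves. Without that extra condition the obstruction you are trying to prove is simply false. Already for $r=2$ with $|\psi\ra=\tfrac{1}{\sqrt2}(|00\ra+|11\ra)$ one may take the single product state $|q_1\ra=|00\ra$ (for one state ``pairwise orthogonal'' is vacuous) and get $\sqrt{2}\,|\psi\ra-|q_1\ra=|11\ra$, a product state. So no rank-one/minor rigidity argument of the kind you sketch can possibly go through unless you first impose $\la\psi|q_k\ra=0$; that hypothesis is precisely what makes the $|a_i|\equiv 1/\sqrt r$ case special.

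The paper's route is also quite different from yours. It does not use a global matrix identity or $2\times2$ minors at all. Instead it treats $r=2$ by a bare-hands parametrization: write the single product state as $(\cos\alpha_1|0\ra+e^{i\theta}\sin\alpha_1|1\ra)\otimes(\cos\beta_1|0\ra+e^{i\delta}\sin\beta_1|1\ra)$, impose $\la\psi|p_1\ra=0$, and impose that $\sqrt{p}\,|\psi\ra+\sqrt{1-p}\,|p_1\ra$ is separable (vanishing determinant of its $2\times2$ coefficient matrix). These two scalar equations are shown to be simultaneously solvable for $p\in(0,1)$ precisely when $|a_0|\neq|a_1|$. For general $r$ the paper then argues only the constructive direction, by iterating the $r=2$ step: assuming (after reordering) $|a_0|\neq|a_1|$, collapse $a_0|00\ra+a_1|11\ra$ to a product vector using one orthogonal product state, then repeat with the next Schmidt term. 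Your proposed converse (adjusting $r-1$ rank-one corrections to hit a rank-one target) is in spirit compatible with this, but the paper's iterative scheme is concrete and uses the $r=2$ computation as its engine; your sketch does not yet supply either the explicit construction or a verification that the $r-1$ product states it produces are pairwise orthogonal (and orthogonal to $|\psi\ra$), which is the whole point here.
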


\begin{proof} We first prove the case of $r=2$, i.e., $|\psi\ra=a_0|00\ra+a_1|11\ra$. Note that the orthogonality between $|\psi\ra$ and any separable state $|p_1\ra=|\alpha_1\ra|\beta_1\ra=(\cos\alpha_1|0\ra+e^{i\theta}\sin\alpha_1|1\ra)\otimes
(\cos\beta_1|0\ra+e^{i\delta}\sin\beta_1|1\ra)$ requires that
\be\label{con1}
a_0\cos\alpha_1\cos\beta_1+a_1\sin\alpha_1\sin\beta_1=0.
\ee
For some $p\in (0,1)$ the superposed state $\sqrt{p}|\psi_1\ra+\sqrt{1-p}|p_1\ra$ is separable only if
\be\label{con2}
pa_0a_1+\sqrt{p(1-p)}\Big[a_0\sin\alpha_1\sin\beta_1+a_1\cos\alpha_1\cos\beta_1\Big]=0.
\ee
(\ref{con1}) and (\ref{con2}) hold simultaneously only if $|a_0|\neq|a_1|$. Taking into account the normalization $a^2_0+a^2_1=1$, we have that the solutions of (\ref{con1}) and (\ref{con2}) exists for $p\in(0,1)$ only if $|a_0|\neq|a_1|\neq 1/\sqrt{2}$.

Next we prove the general case by induction. Without loss of generality, we assume that $|a_0|\neq|a_1|$. There exist $p\in(0,1)$ and $\alpha_1,\beta_1$ such that $\sqrt{p}\frac{1}{\sqrt{a^2_0+a^2_1}}(a_0|00\ra+a_1|11\ra)+\sqrt{1-p}
(\cos\alpha_1|0\ra+\sin\alpha_1|1\ra)\otimes(\cos\beta_1|0\ra+\sin\beta_1|1\ra)=
(x_0|0\ra+x_1|1\ra)\otimes(y_0|0\ra+y_1|1\ra)=|\phi_1\ra$. Therefore, the entanglement in the term $|00\ra+|11\ra$ is vanished. In this way, we can then vanish the entanglement given by $|\phi_1\ra$ and the next term $a_2|22\ra$ in $|\psi\ra$. By iterating $r-1$ times we obtain a separable state at last.
\end{proof}

{\it Example 5.}
Consider $|\psi\ra=\frac{3}{\sqrt{149}}|00\ra+\frac{6}{\sqrt{149}}|11\ra
+\frac{2\sqrt{26}}{\sqrt{149}}|22\ra$ with Schmidt rank 3. Set $|p_1\ra=\frac{1}{\sqrt{5}}(-|0\ra+2|1\ra)\otimes\frac{1}{\sqrt{17}}(4|0\ra+|1\ra)$. The entanglement given by the terms $|00\ra$ and $|11\ra$ is vanished by superposing $|p_1\ra$,
$$\begin{aligned}
\frac{3}{\sqrt{26}}&(\frac{1}{\sqrt{5}}|00\ra+\frac{2}{\sqrt{5}}|11\ra)+
\sqrt{\frac{17}{26}}|p_1\ra\\
&=\frac{1}{\sqrt{65}}(-|0\ra+8|1\ra)\otimes\frac{1}{\sqrt{2}}(|0\ra+|1\ra)
=|\alpha_1\ra\otimes|+\ra,
\end{aligned}
$$
where $|\alpha_1\ra=\frac{1}{\sqrt{65}}(-|0\ra+8|1\ra)$ and $|+\ra=\frac{1}{\sqrt{2}}(|0\ra+|1\ra)$, which are both orthogonal both to $|2\ra$. We then take $|p_2\ra=\frac{1}{\sqrt{5}}(|\alpha_1\ra+2|2\ra)
\otimes\frac{1}{\sqrt{21}}(4|+\ra-\sqrt{5}|2\ra)$. The superposition between $|\alpha_1\ra\otimes|+\ra$ and $|p_2\ra$ gives again a separable state,
$$\begin{aligned}
&\sqrt{\frac{3}{178}}\Big[\frac{\sqrt{5}}{3}|\alpha_1\ra|+\ra+\frac{2}{3}|22\ra\Big]
+\sqrt{\frac{175}{178}}|p_2\ra\\
&\quad=\frac{1}{\sqrt{89}}(5|\alpha_1\ra+8|2\ra)\otimes\frac{1}{\sqrt{6}}
(\sqrt{5}|+\ra-|2\ra).
\end{aligned}$$
Finally we have
$$
\begin{aligned}
&\lambda|\psi\ra+\mu_1|p_1\ra+\mu_2|p_2\ra\\
&=\frac{1}{\sqrt{89}}(5|\alpha_1\ra+8|2\ra)\otimes\frac{1}{\sqrt{6}}(\sqrt{5}|+\ra-|2\ra),
\end{aligned}
$$
where $\lambda=\sqrt{\frac{149}{178\times78}}$, $\mu_1=\sqrt{\frac{85}{178\times78}}$ and $\mu_2=\sqrt{\frac{175}{178}}$.

\bigskip
For tripartite systems, we consider the generalized GHZ states.

\begin{theorem}\label{thm5}
The entanglement of the genuinely multipartite entangled state $|\psi\ra=\sum^r_{i=1}a_i|iii\ra$ $(r\geq 3)$ can be vanished by superposing $2r-1$ product pure states which are orthogonal to $|\psi\ra$ (but not orthogonal mutually) when not all $|a_i|$ equal to $1/\sqrt{r}$.
\end{theorem}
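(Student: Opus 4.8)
The plan is to split the argument into two stages: first ``peel off'' the subsystem $A$ at the cost of $r$ product states, reducing $|\psi\rangle$ to a state biseparable across $A|BC$ whose $BC$-part is again a (now bipartite) generalized GHZ state, and then invoke \autoref{thm4} on that bipartite state at the cost of $r-1$ further product states, giving $r+(r-1)=2r-1$ in all.

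For the first stage, write $|\psi\rangle=\sum_{i=1}^{r}a_i|i\rangle_A\otimes|ii\rangle_{BC}$, fix a unit vector $|v\rangle_A=\sum_i v_i|i\rangle_A$ with all $v_i\neq 0$, and take the $r$ product states
$$
|p_k\rangle=|\alpha_k\rangle_A\otimes|k\rangle_B\otimes|k\rangle_C,\qquad |\alpha_k\rangle_A\propto \tfrac{1}{v_k}|v\rangle_A-|k\rangle_A,\qquad k=1,\dots,r .
$$
Since $\langle k|\alpha_k\rangle_A=0$ one has $\langle\psi|p_k\rangle=\bar a_k\langle k|\alpha_k\rangle_A=0$, so every $|p_k\rangle\perp|\psi\rangle$, and for $r\geq 2$ they are visibly not mutually orthogonal. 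A short computation (with weights $\mu_k$ proportional to $a_k$) then gives
$$
\lambda|\psi\rangle+\sum_{k=1}^{r}\mu_k|p_k\rangle \;\propto\; |v\rangle_A\otimes\Big(\sum_{j=1}^{r}\tfrac{a_j}{v_j}\,|j\rangle_B|j\rangle_C\Big)\;\hat{=}\;|v\rangle_A\otimes|\eta\rangle_{BC},
$$
so the state is now biseparable across $A|BC$ and its $BC$-part $|\eta\rangle_{BC}=\sum_j\eta_j|jj\rangle_{BC}$, $\eta_j\propto a_j/v_j$, is a bipartite generalized GHZ state of Schmidt rank $r$; since the $|a_i|$ are not all equal, the $|\eta_j|$ need not all be equal either (already $v_j=1/\sqrt r$ secures this), so \autoref{thm4} applies to $|\eta\rangle_{BC}$.

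For the second stage, \autoref{thm4} furnishes $r-1$ product states $|q_k\rangle_{BC}$ whose nontrivial superposition with $|\eta\rangle_{BC}$ is a product state $|b\rangle_B\otimes|c\rangle_C$. Tensoring with the fixed $|v\rangle_A$ gives $r-1$ fully product states $|p_k'\rangle=|v\rangle_A\otimes|q_k\rangle_{BC}$, and superposing $|\psi\rangle$ with all of $|p_1\rangle,\dots,|p_r\rangle,|p_1'\rangle,\dots,|p_{r-1}'\rangle$ yields the fully product state $|v\rangle_A\otimes|b\rangle_B\otimes|c\rangle_C$ — vanishing the entanglement with $2r-1$ product states, as claimed.

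The delicate point, which I expect to be the main obstacle, is to ensure that the second-stage states are also orthogonal to $|\psi\rangle$. One computes $\langle\psi|p_k'\rangle\propto\langle\zeta|q_k\rangle_{BC}$ for the diagonal state $|\zeta\rangle_{BC}=\sum_j (a_j\bar v_j)\,|jj\rangle_{BC}$, which is tied to $|\eta\rangle_{BC}$ through the $|\psi\rangle$-fixed relation $\eta_j\zeta_j\propto|a_j|^2$ while the ratios $\eta_j/\zeta_j$ stay free through the choice of $|v\rangle_A$. One must thus select $|v\rangle_A$ together with the free superposition parameters inside the inductive construction behind \autoref{thm4} (one per step) so that every $|q_k\rangle_{BC}$ — itself depending on $|\eta\rangle_{BC}$, hence on $|v\rangle_A$ — is orthogonal to $|\zeta\rangle_{BC}$. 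This is really a fixed-point/solvability condition, but the parameter budget (the free entries of $|v\rangle_A$ plus the $r-1$ step parameters, versus $r-1$ orthogonality conditions) is comfortably in its favour, and one also checks that the coefficient-inequality hypothesis survives every step of \autoref{thm4} because each superposition merely rescales the surviving coefficients by a common factor. Making this parameter matching precise, backed by explicit small examples, is the technical heart of the proof.
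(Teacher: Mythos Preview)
Your two-stage strategy (peel off one subsystem with $r$ product states, then invoke \autoref{thm4} for the remaining bipartite GHZ at the cost of $r-1$ more) is exactly the paper's route. The paper applies Lemma~2 to the third subsystem, obtaining $\bigl(\sum_i a_i|ii\rangle_{AB}\bigr)\otimes\frac{1}{\sqrt r}\sum_j|j\rangle_C$, and then tensors the $r-1$ product states from \autoref{thm4} with the fixed uniform vector on $C$.

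Where you diverge is in allowing a general $|v\rangle_A$ in stage~1, which is what creates your ``delicate point''. The paper sidesteps this entirely: because its first stage leaves the \emph{same} Schmidt coefficients $a_i$ on the surviving bipartite piece, the state $|\zeta\rangle$ you would have to be orthogonal to \emph{coincides} with $|\eta\rangle$, and the orthogonality built into \autoref{thm4} does the job automatically. In your notation this is precisely the choice $v_j=1/\sqrt r$, which you already mention parenthetically: then $\eta_j\propto a_j\propto\zeta_j$, so $|\eta\rangle_{BC}=|\zeta\rangle_{BC}$ and $\langle\psi|p_k'\rangle\propto\langle\eta|q_k\rangle=0$ for free. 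No parameter-counting or fixed-point argument is needed; the ``technical heart'' you anticipate simply does not arise once $|v\rangle$ is taken uniform. With that single specialization your proof becomes the paper's proof (up to the relabeling $A\leftrightarrow C$).
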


\begin{proof}
Making use of the method in proving Lemma 2, we take
$|p_i\ra=|ii\ra\otimes\frac{1}{\sqrt{r-1}}\sum_{j\neq i}|j\ra$ which are orthogonal to $|\psi\ra$. Set $\lambda'=\frac{1}{\sqrt{r}}$ and $\mu'_i=\sqrt{\frac{r-1}{r}}a_i$. Then the superposition of the state $|\psi\ra$ and $|p_i\ra$ leads to a bi-separable state,
$$
\lambda'|\psi\ra+\sum^r_{i=1}\mu'_i|p_i\ra=\left(\sum^r_{i=1}a_i|ii\ra\right)
\otimes\left(\frac{1}{\sqrt{r}}\sum^r_{i=1}|i\ra\right).$$

According to \autoref{thm4}, it is possible to find nonzero $x$ and $y_i$ satisfying $x^2+\sum^{r-1}_{i=1}y^2_i=1$, and a set of orthogonal bipartite separable states $\{|q'_i\ra\}^{r-1}_{i=1}$ such that $$x\sum^r_{i=1}a_i|ii\ra+\sum^{r-1}_{i=1}y_i|q'_i\ra=|\phi\ra\otimes|\phi'\ra.$$
Set $\lambda=\lambda'\cdot x=\frac{x}{\sqrt{r}}$, $\mu_i=\mu'_i x=\sqrt{\frac{r-1}{r}}a_i x$ and $\eta_i=y_i$. Denote $|p_i\ra=|ii\ra\otimes\frac{1}{\sqrt{r-1}}\sum_{j\neq i}|j\ra$ and $|q_i\ra=|q'_i\ra\otimes\frac{1}{\sqrt{r}}\sum^r_{j=1}|j\ra$, $i=1,2,\ldots,r-1$, which are orthogonal to $|\psi\ra$. We have
\be\nonumber
\begin{aligned}
&\lambda|\psi\ra+\sum^r_{i=1}\mu_i|p_i\ra+\sum^{r-1}_{j=1}\eta_j|q_j\ra\\
&\quad=x\left(\lambda'|\psi\ra+\sum^r_{i=1}\mu'_i|p_i\ra\right)
+\sum^{r-1}_{j=1}y_j|q'_j\ra\otimes\frac{1}{\sqrt{r}}\sum^r_{j=1}|j\ra\\
&\quad=\left(x\sum^r_{i=1}a_i|ii\ra+\sum^{r-1}_{i=1}y_i|q'_i\ra\right)
\otimes\frac{1}{\sqrt{r}}\sum^r_{j=1}|j\ra\\
&\quad=|\phi\ra\otimes|\phi'\ra\otimes\frac{1}{\sqrt{r}}\sum^r_{j=1}|j\ra.
\end{aligned}
\ee
In this way we construct a set of $2r-1$ product states which are orthogonal to $|\psi\ra$ such that their superposition is a fully separable state.
\end{proof}

\section{4. Conclusions}

The superposition of pure states is the hypostatic cause of quantum entanglement. A suitable superposition of an entangled state with certain product states may give rise to a separable state. The multipartite entanglement is more complicated than bipartite cases. We have studied the robustness of genuine multipartite entanglement and inseparability of multipartite pure states under superposition with product pure states, by introducing the concepts of maximal and minimal Schmidt ranks for multipartite states. Based on the minimal and maximal Schmidt ranks of the first order, as well as the minimal Schmidt ranks of the second order, we have presented the number of (either arbitrary or orthogonal) product states superposed to a given (genuine) multipartite entangled state, so that the whole superposed state keeps (genuine) multipartite entangled. The results show also the resistance of GME and multipartite entanglement to the mixture of separable states. Unextendible product bases in multipartite quantum systems  attract a lot of attentions \cite{Bennett,Divincenzo,Xu,Halder2,Rout,Yuan,Bej,Zhou} while it supplies an effective way to construct bound entangled states. In \cite{Halder1} a set of non-orthogonal unextendible product bases is constructed by superposing a given entangled state and a set of product states in bipartite systems. This may also supply a method to construct unextendible product basis in multipartite systems.

\section{Acknowledgements}
This work is supported by Zhejiang Provincial Natural Science Foundation of China under Grant No. LZ23A010005, the National Natural Science Foundation of
China under grant Nos. 12171044 and 12075159, and the specific research fund of the Innovation Platform for Academicians of Hainan Province. H. Q. acknowledges the fellowship from the China scholarship council.



\begin{thebibliography}{}

\bibitem{Horodecki} Horodecki, R., Horodecki, P., Horodecki, M., and Horodecki, K.: Quantum entanglement, Rev. Mod. Phys. {\bf 81}, 865 (2009).

\bibitem{Braunstein} Braunstein, S. L., and van Loock., P.:  Quantum information with continuous variables, Rev. Mod. Phys. {\bf 77}, 513 (2005).

\bibitem{Vedral} Vedral,  V.: Quantum entanglement, Nat. Phys. {\bf10}, 1038 256–258 (2014).

\bibitem{Cavalcanti}Cavalcanti, D., Skrzypczyk,  P.,  $\mathrm{\check{S}}$upi$\mathrm{\acute{c}}$, I.:  All entangled states can demonstrate nonclassical teleportation, Phys. Rev. Lett. {\bf 119}, 110501 (2017).

\bibitem{Bruss} Bruß, D., D'Ariano, G.M.,  Lewenstein, M.,  Macchiavello, C.,  Sen(De),A., and  Sen, U.: Distributed quantum dense coding, Phys. Rev. Lett. {\bf93}, 210501 (2004).

\bibitem{Barrett} Barrett, J., Hardy, L., and  Kent, A.: No signaling and quantum key distribution, Phys. Rev. Lett. {\bf95}, 010503 (2005).

\bibitem{Nielsen}  Nielsen, M.A., Chuang,I.L.: Quantum Computation and Quantum Information. 10th Anniversary Edition, Cambridge University Press. (2017).

\bibitem{Terhal} Terhal, B.M.,  Horodecki, P.: Schmidt number for density matrices, Phys. Rev. A {\bf 61}, 040301(R) (2000).

\bibitem{Sanpera}  Sanpera, A., Bruß, D.,  and  Lewenstein, M.: Schmidt-number witnesses and bound entanglement, Phys. Rev. A {\bf63}, 050301(R) (2001).

\bibitem{Sperling}  Sperling, J.,  Vogel, W., The Schmidt number as a universal entanglement measure, Phys. Scr. {\bf83} 045002 (2011).

\bibitem{Zhang}  Zhang, C.J., Denker, S.,  Asadian, A., and Gühne, O.:  Analyzing quantum entanglement with the Schmidt decomposition in operator space, Phys. Rev. Lett. {\bf 133} (3) 040203 (2024).


\bibitem{Duer}  Dür,W., Briegel, H.J.:  Entanglement purification for quantum computation, Phys. Rev. Lett. {\bf90}, 067901 (2003).

\bibitem{Halder} Halder,S., Sen, U.:  Separability and entanglement in superpositions of quantum states, Phys. Rev. A {\bf 107}, 022413 (2023).

\bibitem{Bennett}Bennett, C.H.,  DiVincenzo, D.P., Mor, T., Shor, P. W., Smolin,J. A., and Terhal,B.M.:  Unextendible product bases and bound entanglement, Phys. Rev. Lett. {\bf82}  5385 (1999).

\bibitem{Divincenzo} DiVincenzo,D. P., Mor,T.,  Shor,P. W., Smolin, J.A., and
Terhal, B. M.: Unextendible product bases, uncompletable product
bases and bound entanglement, Commun. Math. Phys. {\bf238}, 379–410 (2003).

\bibitem{Xu} Xu,G.B.,  Wen, Q.Y., Qin, S.J., Yang, Y.H., and Gao,F.:  Quantum nonlocality of multipartite orthogonal product states,
Phys. Rev. A {\bf93}, 032341 (2016).

\bibitem{Halder2} S. Halder, M. Banik, S. Agrawal and S. Bandyopadhyay, Strong quantum nonlocality without entanglement, Phys. Rev. Lett. {\bf122}, 040403 (2019).


\bibitem{Rout} Rout, S., Maity, A.G.,Mukherjee, A., Halder, S., and BanikGenuinely, M.: Nonlocal product bases: classification and entanglement-assisted discrimination, Phys. Rev. A  {\bf100}, 032321 (2019).

\bibitem{Yuan} Yuan,P., Tian, G.J.,  and Sun,X.M.: Strong quantum nonlocality without entanglement in multipartite quantum systems, Phys. Rev. A {\bf102}, 042228 (2020).

\bibitem{Bej} Bej, P., Halder, S.: Unextendible product bases, bound entangled states, and the range criterion, Phys. Lett. A {\bf386} 126992 (2021).

\bibitem{Zhou} Zhou,H.Q., Gao, T.,  and Yan, F.L.:  Strong quantum nonlocality without entanglement in n-partite system with even n, Phys. Rev. A {\bf107}, 042214 (2023).

\bibitem{Halder1}Halder, S., Streltsov, A.:  Unextendibility, uncompletability, and many-copy indistinguishable ensembles, arxiv2303.17507.








\end{thebibliography}

\end{document}